\pgfplotsset{compat=newest}
\pgfplotsset{plot coordinates/math parser=false}
\newlength\figureheight
\newlength\figurewidth
\begin{document}

\title{Pincer-Based vs. Same-Direction Strategies of Search for Smart Evaders by Swarms of Agents}


\author{Roee M. Francos
        and Alfred M. Bruckstein
\thanks{Roee M. Francos and Alfred M. Bruckstein are with the Faculty
of Computer Science, Technion- Israel Institute of Technology, Haifa, Israel, 320003, Emails:(roee.francos@cs.technion.ac.il,
  alfred.bruckstein@cs.technion.ac.il).}
}

\maketitle
\begin{abstract}
Suppose in a given planar region, there are smart mobile evaders and we want to detect them using sweeping agents. We assume that the agents have line sensors of equal length. We propose procedures for designing cooperative sweeping processes that ensure successful completion of the task, thereby deriving conditions on the sweeping speed of the agents and their paths. Successful completion of the task means that evaders with a known limit on their speed cannot escape the sweeping agents. A simpler task for the sweeping swarm is the confinement of the evaders to their initial domain. The feasibility of completing these tasks depends on geometric and dynamic constraints that impose a lower bound on the speed the sweeping agent must have. This critical speed is derived to ensure the satisfaction of the confinement task. Increasing the speed above the lower bound enables the agents to complete the search task as well. We present a quantitative and qualitative comparison analysis between the total search time of same-direction sweep processes and pincer-movement search strategies. We evaluate the different strategies by using two metrics,  total search time and the minimal critical speed required for a successful search. We compare two types of pincer-movement search processes, circular and spiral, with their same-direction counterparts, for any even number of sweeping agents. We prove that pincer based strategies provide superior results in all practical scenarios and that the spiral pincer sweep process allows detection of all evaders while sweeping at nearly theoretically optimal speeds.
\end{abstract}


\IEEEpeerreviewmaketitle

\section{Introduction}
\IEEEPARstart{T}{he}
aim of this work is to provide an efficient "must-win" search policy for a swarm of $n$ sweeping agents that must guarantee detection of an unknown number of smart evaders initially residing inside a given circular region of radius $R_0$ while minimizing the search time. The evaders move and try to escape the initial region at a maximal speed of $V_T$, known to the sweepers. All sweepers move at a speed $V_s > V_T$ and detect the evaders using linear sensors of length $2r$. Each "must-win" policy requires a minimal speed that depends on the trajectory of the sweepers. Finding an efficient algorithm requires that, throughout the sweep, the footprint of the sweepers’ sensors maximally overlaps the evader region (the region where evaders may possibly be). This work develops two "must-win"  same-direction search strategies, circular and spiral, for a swarm consisting of an even number of searchers that sweep the evader region until all evaders are detected. Afterwards, a comparison between the developed same-direction and pincer-based search strategies developed in \cite{francos2021tro} is performed.
\subsection{Overview of Related Research}

Several interesting search problems originated in the second world war due to the need to design patrol strategies for aircraft aiming to detect ships or submarines in the English channel, see \cite{koopman1980search}. The problem of patrolling a corridor using multi agent sweeping systems in order to ensure the detection and interception of smart targets was also investigated in \cite{vincent2004framework} and provably optimal strategies were provided in \cite{altshuler2008efficient}.

In \cite{bressan2008blocking,bressan2012optimal}, Bressan et al. investigate optimal strategies for the construction of barriers in real time aiming at containing and confining the spread of fire from a given initial area of the plane. The authors are interested in determining a minimal possible barrier construction speed that enables the confinement of the fire, and on determining optimality conditions for confinement strategies. 

A non-escape search procedure for evaders that are originally located in a convex region of the plane and may move out of it is investigated in \cite{tang2006non}, and a cooperative progressing spiral-in algorithm performed by several agents with disk shaped sensors in a leader-follower formation is proposed. In \cite{mcgee2006guaranteed}, McGee et al. also investigate a search problem for smart targets that do not have any maneuverability restrictions except for an upper limit on their speed. The sensor the agents are equipped with detects targets within a disk shaped area around the searcher location. Search patterns consisting of spiral and linear sections are considered. In \cite{hew2015linear}, Hew proposes searching for smart evaders using concentric arc trajectories with agents sensors similar to \cite{mcgee2006guaranteed}. Such a search is aimed at detecting submarines in a channel or in a half plane.

Another set of related problems are pursuit-evasion games, where the pursuers' objective is to detect evaders and the evaders objective is to avoid the pursuers. Pursuit-evasion games include combinations of single and multiple evaders and pursuers scenarios. In this context several works considered the problem of defending a region from the entrance of intruders. In \cite{fisac2015reach,chen2016multiplayer}, such problems are investigated under the name of ``reach-avoid games". These types of problems were also addressed in the context of perimeter defense games by Shishika et al. in \cite{shishika2018local,shishika2020cooperative}, with a focus on utilizing cooperation between pursuers to improve the defense strategy. In \cite{shishika2018local}, implicit cooperation between pairs of defenders that move in a ``pincer movement" is performed in order to intercept intruders before they enter a convex region in the plane. In \cite{makkapati2019optimal}, pursuit–evasion problems involving multiple pursuers and multiple evaders (MPME) are studied. Pursuers and evaders are all assumed to be identical, and pursuers follow either a constant bearing or a pure pursuit strategy. The problem is simplified by adopting a dynamic divide and conquer approach, where at every time instant each evader is assigned to a set of pursuers based on the instantaneous positions of all the players. The original MPME problem is decomposed to a sequence of simpler multiple pursuers single evader (MPSE) problems by classifying if pursuer is relevant or redundant for each evader by using Apollonius circles. Only the relevant pursuers participate in the MPSE pursuit of each evader. 

In \cite{francos2019search}, the confinement and cleaning tasks for a line formation of agents or alternatively for a single agent with a linear sensor are analyzed. In \cite{francos2021tro}, teams of agents perform pincer sweep search strategies with linear sensors.

Recent surveys on pursuit evasion problems are \cite{ chung2011search,kumkov2017zero,weintraub2020introduction}. In \cite{chung2011search}, a taxonomy of search problems is presented. The paper highlights algorithms and results arising from different assumptions on searchers, evaders and environments and discusses potential field applications for these approaches. The authors focus on a number of pursuit-evasion games that are directly connected to robotics and not on differential games which are the focus of the other cited surveys. \cite{kumkov2017zero} presents a survey on pursuit problems with $1$ pursuer versus $2$ evaders or $2$ pursuers versus $1$ evader are formulated as a dynamic game and solved with general methods of zero-sum differential games.
In \cite{weintraub2020introduction}, the authors present a recent survey on pursuit-evasion differential games and classify the papers according to the numbers of participating players: single-pursuer single-evader (SPSE), MPSE, one- pursuer multiple-evaders (SPME) and MPME.

\subsection{Contributions} 
 
In this paper, we provide several theoretical and experimental contributions to multi-agent search and coordinated motion planning literature. We propose search protocols that guarantee detection of all smart evaders that are initially located in given circular region from which they may move out of in order to escape the pursuing sweeping agents. A detailed theoretical analysis of trajectories, critical speeds and search times for same-direction sweep protocols performed by a swarm of $n$ cooperative agents are developed in order to quantitatively compare these methods to the pincer-based protocols described in \cite{francos2021tro}.  
 
\begin{itemize}
 \item We propose two types of same-direction sweep protocols:
        \begin{itemize}
        \item  Same-direction circular sweep pincer sweep strategy 
        \item  Same-direction spiral sweep pincer sweep strategy 
        \end{itemize}
\item We prove that for both same-direction sweep protocol types, the corresponding pincer-based protocols yield a lower critical speed.
\item We show that the circular pincer-based sweep protocol always results in shorter sweep times compared to its same-direction counterpart.
\item Results show that as the number of sweepers increases, circular pincer-based protocols require a smaller critical speed even when compared to spiral same-direction protocols. This result indicates that although implementing pincer-based circular search protocols requires sweepers with more basic capabilities compared to spiral protocols, the cooperation between the sweepers considerably improves the overall performance of the sweeper team. 
\item We experimentally show that for all choices of search parameters the spiral pincer-based protocol is favourable compared to its same-direction counterpart and that it results in critical speeds that approach the theoretical lower bound.

\end{itemize}

\section{Same-direction Versus Pincer-based Sweeps}

This paper considers a scenario in which a multi-agent swarm of identical agents search for mobile targets or evaders that are to be detected. The information the agents perceive only comes from their own sensors, and all evaders that intersect a sweeper's field of view are detected. We assume that all agents have a linear sensor of length $2r$. The evaders are initially located in a disk shaped region of radius $R_0$. There can be many evaders, and we consider the domain to be continuous, meaning that evaders can be located at any point in the interior of the circular region at the beginning of the search process. All sweepers move with a speed of $V_s$ (measured at the center of the linear sensor). By assumption the evaders move at a maximal speed of $V_T$, without any maneuverability restrictions. The sweeper swarm's objective is to "clean" or to detect all evaders that can move freely in all directions from their initial locations in the circular region of radius $R_0$. 

Search time clearly depends on the type of sweeping movement the searching team employs. Detection of evaders is based on deterministic and preprogrammed search protocols. We consider two types of search patterns, circular and spiral patterns. The desired result is that after each sweep around the region, the radius of the circle that bounds the evader region (for the circular sweep), or the actual radius of the evader region (for the spiral sweep), decreases by a strictly positive value. This guarantees complete cleaning of the evader region, by shrinking in finite time the possible area in which evaders can reside to zero. At the beginning of the circular search process we assume that only half the length of the agents' sensors is inside the evader region, i.e. a footprint of length $r$, while the other half is outside the region in order to catch evaders that may move outside the region while the search progresses. At the beginning of the spiral search process we assume that the entire length of the agents' sensors is inside the evader region, i.e. a footprint of length $2r$.

In the single agent search problem described in \cite{francos2019search}, we observed that there can be escape from point $P=(0,R_0)$ (shown in Fig. $1 \hspace{1mm} (a)$), when basing the searcher's speed only on a single traversal around the evader region. Therefore we had to increase the agent's critical speed to deal with this possible escape. Point $P$ is considered as the "most dangerous point", meaning that it has the maximum time to spread during sweeper movement, hence if evaders spreading from this point are detected, evaders escaping from all other points are detected as well. If we choose to distribute a multi-agent swarm equally along the boundary of the initial evader region, we would have the same problem of possible escape from the points adjacent to the starting locations of every sweeper.

In \cite{francos2021tro} we proposed an alternative method for multi-agent sweep processes such that pairs of sweeping agents move out in opposite directions along the boundary of the evader region and sweep in a pincer movement rather than having a convoy of sweepers all moving in the same-direction along the boundary. The proposed method is applicable for any even number of sweepers. The sweepers are initially positioned in pairs back to back. One sweeper in the pair moves counter clockwise while the other sweeper in the pair moves clockwise. Once the sweepers meet, i.e. their sensors are again superimposed at a meeting point, they switch the directions in which they move. This changing of directions occurs every time a sweeper bumps into another. Each sweeper is responsible for an angular sector of the evader region that is proportional to the number of participating agents in the search. Sweeping with a pincer-based search protocol eliminates the need to sweep additional areas to detect evaders from these additional "most dangerous points" since in pincer-based protocols the "most dangerous points" are now located at the tips of their sensors closest to the evader region's center.

\begin{figure}[ht]
\noindent \centering{}\includegraphics[width=1.36in,height =2.56in]{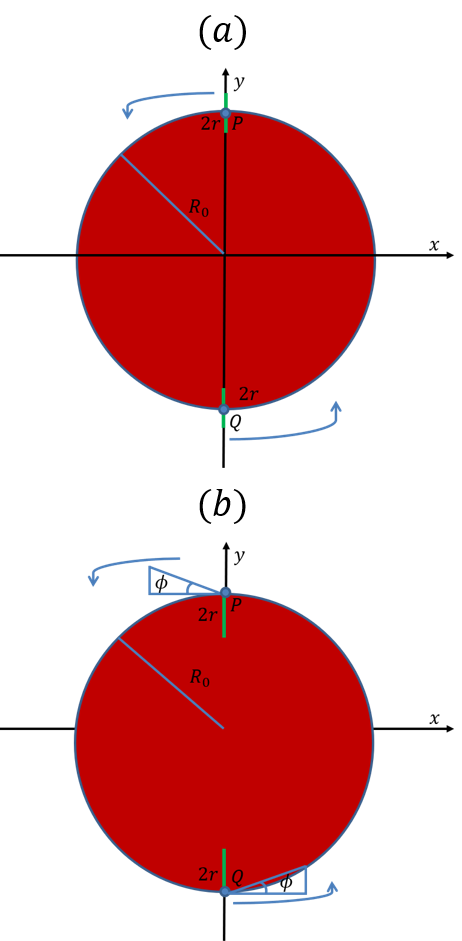} \caption{(a) - Initial placement of $2$ agents employing the same-direction circular sweep process.  (b) - Initial placement of $2$ agents employing the same-direction spiral sweep process. The sweepers sensors are shown in green. The angle $\phi$ is the angle between the tip of a sweeper's sensor and the normal of the evader region. $\phi$ is an angle that depends on the ratio between the sweeper and evader speeds.}
\label{Fig1Label}
\end{figure}

 .


\section{Same-direction Circular Sweep}
\subsection{Circular Sweep Time Calculation}
Previously in \cite{francos2019search} we tried to find the tightest lower bound of a searcher's speed by constructing a function of $2$ variables $f(t,V_s)$, by demanding that the furthest possible spread of the evader region is cleaned by the furthest tip of the sweeper's line sensor. A lesser requirement is to demand that by the time the most problematic point in the evader region, point $P$, spreads to a possible circle of radius of $r$ around point $P$, the sweeping swarm completes in addition to a sweep of $\frac{2\pi}{n}$ around the evader region an additional angular traversal that is proportional to traversing an arc of length $r$. This means that the agent travels an angle of $\frac{2\pi}{n} +\beta_0$ where $\beta_0$ is marked in Fig. $2(a)$. This assumption results in a simplified expression for the critical speed that bounds the previously found critical speed of  \cite{francos2019search} from above for all choices of geometric parameters.
\begin{figure}[ht]
\noindent \centering{}\includegraphics[width=2in,height =1.28in]{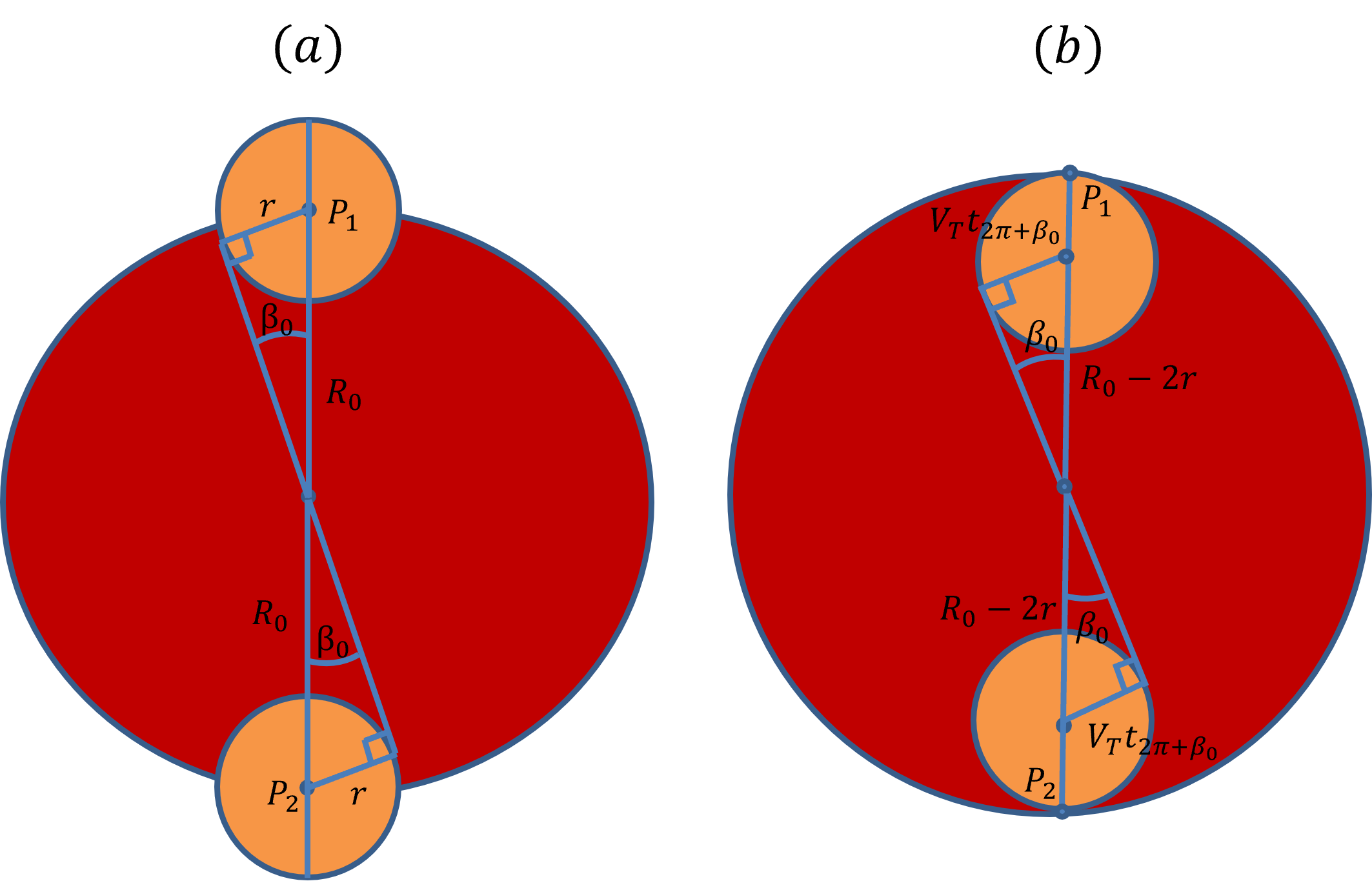} \caption{Geometric representation required for critical speed calculation. Red areas indicate locations where potential evaders may be located. The orange circles denote the spread of potential evaders around the most problematic points $P_1$ and $P_2$  during a traversal of $2\pi +\beta_0$ around the evader region. (a) - Same direction circular sweep. (b) - Same direction spiral sweep.}
\label{Fig2Label}
\end{figure}
We denote the time it takes the most problematic points to spread a distance of $r$ as $T_e$. These points are adjacent to the starting locations of the sweepers, and $2$ such points $P_1$ and $P_2$, exist in case the search is performed with $2$ sweepers, as shown in Fig. $2$. We have that $T_e = \frac{r}{{{V_T}}}$. We can see from Fig. $2$. that $\sin \beta_0  = \frac{r}{{R_0}}$, therefore $\beta_0  = \arcsin \frac{r}{{R_0}}$. The time it takes the sweeper to travel an angle of $\frac{2\pi}{n} +\beta_0$ is therefore given by ${T_s} = \frac{{\left( {\frac{2\pi}{n}  + \arcsin\left( \frac{r}{{R_0}}\right)} \right){R_0}}}{{{V_s}}}$. In order to guarantee no escape, we demand that $T_s \leq T_e$. Therefore, rearranging terms in the previous equation and plugging $T_e$ instead of $T_s$ yields,
\begin{equation}
{V_c} \ge \frac{{\left( {\frac{2\pi}{n}  + \arcsin\left( \frac{r}{{R_0}}\right)} \right){R_0}{V_T}}}{r}
\label{e24}
\end{equation}
The lower bound on a sweeper speed that ensures confinement is obtained when we have equality in (\ref{e24}). In future derivations we use the first order Taylor approximation for the arcsine function in (\ref{e24}), in order to enable the construction of analytical results for the sweep times of the evader region. Such an approximation is valid since in all practical scenarios the ratio between $\frac{r}{R_0}$ is sufficiently small. Applying this approximation to (\ref{e24}) allows us to define ${V_{{c_{circ}}}}$, the chosen critical speed, given by,
\begin{equation}
{V_{{c_{circ}}}} = \frac{{2\pi {R_0}{V_T}}}{{rn}} + {V_T}
\label{e25}
\end{equation}
In order for the sweeper swarm to advance inward toward the center of the evader region it must travel in a speed that is greater than the critical speed. We denote by $\Delta V>0$ the increment in the sweeping agents' speed that is above the critical speed. Each agent's speed ${V_s}$ is therefore given by the sum of the critical speed and $\Delta V$, namely
${V_s} = {V_{{c_{circ}}}} + \Delta V$. 
The total sweep times it takes the sweeper swarm to  reduce the evader region to a region bounded by a circle with a radius that is smaller or equal to $r$ is given by the sum of the circular motions and inward advancements that are performed after the completion of each circular sweep. The time it takes the sweepers to perform the circular sweeps is given by,
\begin{equation}
\begin{array}{l}
{T_{circular}} =  - \frac{{{R_0}\left( {{V_s} + {V_T}} \right)}}{{{V_s}{V_T}}} + \frac{{r\left( {{V_s} - {V_T}} \right)\left( {n\left( {{V_s} + {V_T}} \right) + 2\pi {V_T}{N_n}} \right)}}{{2\pi {V_T}^2{V_s}}} + \\ {\left( {1 + \frac{{2\pi {V_T}}}{{n\left( {{V_s} + {V_T}} \right)}}} \right)^{{N_n}}}\left( {{V_s} + {V_T}} \right)\left( {\frac{{2\pi {R_0}{V_T} - rn\left( {{V_s} - {V_T}} \right)}}{{2\pi {V_T}^2{V_s}}}} \right) \\ + \frac{{2\pi r}}{{n{V_s}}}
\end{array}    
\label{e26}
\end{equation}
The time it takes the sweepers to perform the inward advancement is given by,
\begin{equation}
\begin{array}{l}
{T_{in}} = \frac{{{R_0}}}{{{V_s}}} + {\left( {1 + \frac{{2\pi {V_T}}}{{{V_s} + {V_T}}}} \right)^{N - 1}}\left( {\frac{{2\pi {R_0}{V_T} - r\left( {{V_s} - {V_T}} \right)}}{{{V_s}\left( {{V_s} + {V_T}} \right)}}} \right)
\end{array}
\label{e27}
\end{equation}
The full analytical development is provided in Appendix $A$. 

\subsection{Same-direction Circular Sweep End-game}
In order to entirely clean the evader region the sweepers need to change the scanning method when the evader region is bounded by a circle of radius $r$. This is due to the fact that a smart evader that is very close to the center of the evader region can travel at a very high angular velocity compared to the angular velocity of the pursuing agents. This constraint is described by the following two equations, ${\omega _s} = \frac{{{V_s}}}{r}, {\omega _T} = \frac{{{V_T}}}{\varepsilon }$. The first describes the searcher's angular velocity and the second the evader's angular velocity. Since $\varepsilon$ can be arbitrarily small the evader can move just behind a sweeper's sensor and never be detected. Thus a slight modification to the sweep process needs to be applied in order to clean the entire evader region with the sweeper swarm that employs a circular scan. After completing sweep number $N_n -1$ the sweepers move toward the center of the evader region until the tip of the sweeper's sensors closest to the center of the evader region are placed at the center of the evader region. Following this motion the sweepers perform a circular sweep of radius $r$ around the center of the evader region. The time this last circular sweep takes is given by ${T_{last}} = \frac{{2\pi r}}{{{n V_s}}}$. Therefore, after the last circular scan the evader region is bounded by a circle of radius ${R_{last}}$, given by,
\begin{equation}
{R_{last}} = {T_{last}}{V_T} = \frac{{2\pi r V_T}}{{{n V_s}}}
\label{e28}
\end{equation}
In order to overcome the challenges in the circular search that were described we propose that after scan number $N_n + 1$ the sweeper swarm will travel to the right until cleaning the wavefront that propagates from the right portion of the remaining evader region and then travel to the left until cleaning the remaining evader region. A depiction of the scenario at the beginning of the end-game is presented in Fig. $3$. Theorem $1$ states the conditions for this demand to hold.

\begin{figure}[ht]
\noindent \centering{}\includegraphics[width=3.4in,height =1.7in]{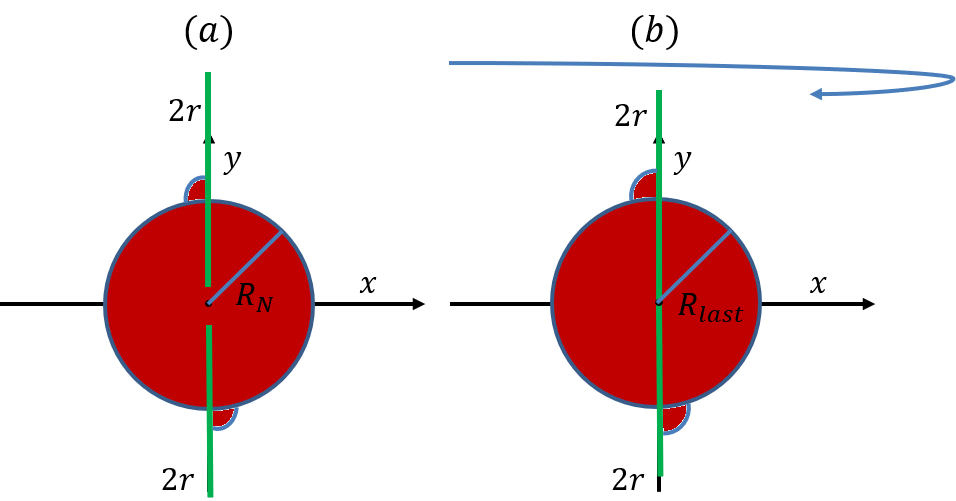} \caption{Depiction of the end-game steps for the same-direction circular sweep performed by $2$ sweepers. The sweepers sensors' are shown in green and red areas indicate locations where potential evaders may still be located. (a) - Evader region status and sweepers' locations prior to the last inward advancement. (b) - Evader region status and sweepers' locations prior to the linear sweep.}
\label{Fig3Label}
\end{figure}

\newtheorem{thm}{Theorem}
\begin{thm}
When defining $\alpha= \frac{{R_0}}{r}$, if $\Delta V$ satisfies that,
\begin{equation}
\Delta V \ge \frac{{ - 4\pi {V_T}\alpha  + \pi {V_T} + {V_T}\sqrt {{\pi ^2} + 8\pi n} }}{{2n}}
\label{e29}
\end{equation}
then the evader region will be completely cleaned by $n$ sweepers that employ the linear scan after $N_n +1$ iterations.
\end{thm}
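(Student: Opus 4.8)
The plan is to reduce the end-game to a one-dimensional chase problem on the residual disk and then to an algebraic inequality in $\Delta V$. By the construction preceding the theorem, after the last circular sweep every surviving evader lies within a disk of radius $R_{last}=\frac{2\pi r V_T}{n V_s}$ centered at the origin, and the sensor length $2r$ greatly exceeds $R_{last}$. I would model the proposed motion by a single effective vertical sensor of length $2r$ on the $x$-axis that first traverses rightward at speed $V_s$ and then reverses and traverses leftward at speed $V_s$, and ask for which $V_s$ this bidirectional pass sterilizes the residual disk.

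First I would analyze the rightward phase. The fastest-fleeing evader runs straight right from the right rim of the disk, so the sensor overtakes it at time $t_1=\frac{R_{last}}{V_s-V_T}$, having reached $x_1=\frac{V_s R_{last}}{V_s-V_T}$; beyond this the right portion is clean. I would then turn the sensor around: the most dangerous evader is the one running straight left from the left rim since $t=0$, which is therefore caught last. Solving $x_1-V_s(t-t_1)=-R_{last}-V_T t$ for the capture instant and substituting back yields a leftmost capture position at distance $x_{left}=R_{last}\frac{V_s(V_s+V_T)}{(V_s-V_T)^2}$ from the origin, which is exactly the total leftward displacement of that worst-case evader up to its capture.

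The confinement (``must-win'') requirement for this final pass is that this worst-case spread remain within the sensor's length, i.e. $x_{left}\le 2r$. Substituting $R_{last}=\frac{2\pi r V_T}{n V_s}$ and clearing denominators collapses this neatly to the quadratic condition $n(V_s-V_T)^2\ge \pi V_T(V_s+V_T)$. The remaining step is purely algebraic. Writing $V_s=V_c+\Delta V$ with $V_c=\frac{2\pi R_0 V_T}{rn}+V_T$, so that $u:=V_s-V_T=\frac{2\pi\alpha V_T}{n}+\Delta V$ with $\alpha=R_0/r$ and $V_s+V_T=u+2V_T$, the inequality becomes $n u^2-\pi V_T u-2\pi V_T^2\ge 0$. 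Taking the positive root gives $u\ge \frac{V_T\left(\pi+\sqrt{\pi^2+8\pi n}\right)}{2n}$, and subtracting $\frac{2\pi\alpha V_T}{n}$ recovers exactly the stated bound (\ref{e29}) on $\Delta V$.

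I expect the genuine obstacle to be the second and third steps: justifying that the straight-line fleeing trajectories are the binding ones, and pinning down the confinement criterion against the full sensor length $2r$. A more permissive evader that flees diagonally (up and to the left) introduces a $\sqrt{V_s^2-V_T^2}$ factor and would spoil the clean quadratic, so the argument must either show such trajectories are dominated or adopt the conservative confinement of the worst-case displacement to $2r$ that the construction intends. Once the correct worst-case displacement and the $2r$ confinement are fixed, the rest is bookkeeping and the quadratic formula.
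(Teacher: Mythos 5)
Your proposal is correct and follows essentially the same route as the paper: the same one-dimensional chase computation (your capture equations for $t$ and $\tilde t$ are exactly the paper's, giving $t = \frac{R_{last}}{V_s - V_T}$ and $\tilde t = \frac{2V_s R_{last}}{(V_s - V_T)^2}$), a confinement criterion that is algebraically identical to the paper's margin condition $\frac{2r - R_{last}}{V_T} > T_{linear}$ (both collapse to $2r\left(V_s - V_T\right)^2 > R_{last}V_s\left(V_s + V_T\right)$, i.e.\ equation (\ref{e33})), and the same quadratic yielding the bound (\ref{e29}). The only cosmetic differences are that you solve the quadratic in $u = V_s - V_T$ before translating back to $\Delta V$ while the paper expands directly in $\Delta V$, and that the paper, like you, simply adopts the conservative $2r$-confinement to justify the one-dimensional reduction rather than analyzing diagonal escape trajectories, so the concern you flag is resolved there by the same convention you propose.
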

\begin{proof}
During the previously mentioned movement the margin between the tip of the sensor in each direction to the evader region boundaries must satisfy,
\begin{equation}
\frac{{2r - {R_{last}}}}{{{V_T}}} > {T_{linear}}
\label{e30}
\end{equation}
in order to guarantee no escape. ${T_{linear}}$ denotes the time it takes the sweepers to clean the right section of the remaining evader region in addition to the time it takes them to scan from the rightmost point they got to until the leftmost point of the expansion. These times are respectively denoted as $t$ and $ \tilde t$. Therefore, ${T_{linear}}$ is given by ${T_{linear}} = \tilde t + t$. The evader region's rightmost point of expansion starts from the point $(R_{last},0)$ and spreads at a speed of $V_T$. Therefore, if the constraint in (\ref{e30}) is satisfied we can view the rightward and leftward linear sweeps as a one dimensional scan. This geometric constraint can be observed in Fig. $3$. Therefore, the time $t$ it takes the sweepers to clean the spread of potential evaders from the right section of the region can be calculated from, ${V_s}t = {R_{last}} + {V_T}t$. Therefore, $t$ is given by, $t = \frac{{{R_{last}}}}{{{V_s} - {V_T}}}$. $\tilde t$ is computed by calculating the time it takes the sweepers located at point $(tV_s,0)$ to change their scanning direction and perform a leftward scan to a point that spread at a speed of $V_T$ from the leftmost point in the evader region at the origin of the search, the point $(-R_{last},0)$, for a time given by ${\tilde t + t}$. We have that, $- {R_{last}} - {V_T}\left( {\tilde t + t} \right) = t{V_s} - {V_s}\tilde t$. Plugging in the value of $t$ yields $\tilde t = \frac{{2{V_s}{R_{last}}}}{{{{\left( {{V_s} - {V_T}} \right)}^2}}}$. $T_{linear}$ is therefore given by,
\begin{equation}
{T_{linear}} = t + \tilde t = \frac{{6\pi r{V_T}{V_s} - 2\pi r{V_T}^2}}{{n{V_s}{{\left( {{V_s} - {V_T}} \right)}^2}}}
\label{e31}
\end{equation}
Therefore, the total scan time until a complete cleaning of the evader region is given by $T_{total} = {T_{circular}} + {T_{in}} + {T_{linear}}$. For the one dimensional scan to be valid and ensure a non escape search and complete cleaning of the evader region (\ref{e30}) must be satisfied. This demand implies that for a given $\alpha$, the designer of the sweep process can infer which $\Delta V $ needs to be chosen in order to satisfy (\ref{e29}) and thus completely clean the evader region using the final linear sweeping motion. For a complete derivation see Appendix $B$.
\end{proof}
\begin{thm}
For a valid circular search process the total search time until a complete cleaning of the evader region is given by, $T = {T_{circular}} + {T_{in}} + {T_{linear}}$, or as,
\begin{equation}
\begin{array}{l}
T =  - \frac{{{R_0}}}{{{V_T}}} + \frac{{r\left( {{V_s} - {V_T}} \right)\left( {n\left( {{V_s} + {V_T}} \right) + 2\pi {V_T}{N_n}} \right)}}{{2\pi {V_T}^2{V_s}}} +\\ {\left( {1 + \frac{{2\pi {V_T}}}{{n\left( {{V_s} + {V_T}} \right)}}} \right)^{{N_n} - 1}}\left( {\frac{{2\pi {R_0}{V_T} - rn\left( {{V_s} - {V_T}} \right)}}{{{V_s}}}} \right)\\\left( {\frac{1}{{n\left( {{V_s} + {V_T}} \right)}} + \frac{{{V_s}}}{{2\pi {V_T}^2}} + \frac{1}{{2\pi {V_T}}} + \frac{1}{{n{V_T}}}} \right) + \frac{{2\pi r}}{{n{V_s}}}\\
 + \frac{{6\pi r{V_T}{V_s} - 2\pi r{V_T}^2}}{{n{V_s}{{\left( {{V_s} - {V_T}} \right)}^2}}}
\end{array}
\label{e35}
\end{equation}
\end{thm}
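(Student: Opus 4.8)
The plan is to derive \eqref{e35} not through any new dynamical argument but by directly summing the three time-contributions already established and consolidating them algebraically. By Theorem $4$, whenever $\Delta V$ satisfies \eqref{e29} the end-game linear sweep validly cleans the residual region, so the total search time is exactly $T = {T_{circular}} + {T_{in}} + {T_{linear}}$, with the summands given by \eqref{e26}, \eqref{e27} and \eqref{e31}. Hence the entire content of the statement is the reorganization of these three expressions into the compact closed form, and the proof is a controlled bookkeeping exercise rather than a new estimate.

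First I would isolate the non-exponential terms proportional to $R_0$. The constant part of ${T_{circular}}$ contributes $-\frac{{{R_0}\left( {{V_s} + {V_T}} \right)}}{{{V_s}{V_T}}}$, while ${T_{in}}$ contributes $+\frac{{{R_0}}}{{{V_s}}}$; placing these over the common denominator ${V_s}{V_T}$ collapses them to $-\frac{{{R_0}}}{{{V_T}}}$, the leading term of \eqref{e35}. The purely $r$-dependent non-exponential pieces, namely $\frac{{r\left( {{V_s} - {V_T}} \right)\left( {n\left( {{V_s} + {V_T}} \right) + 2\pi {V_T}{N_n}} \right)}}{{2\pi {V_T}^2{V_s}}}$ and $\frac{{2\pi r}}{{n{V_s}}}$ from ${T_{circular}}$, together with the whole of ${T_{linear}}$, carry over unchanged, so these require no manipulation beyond transcription.

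The crux is merging the two exponential contributions, which enter with \emph{different} powers of the common base $\lambda := 1 + \frac{{2\pi {V_T}}}{{n\left( {{V_s} + {V_T}} \right)}}$: the term from ${T_{circular}}$ carries $\lambda^{{N_n}}$ and the term from ${T_{in}}$ carries $\lambda^{{N_n}-1}$. I would write $\lambda^{{N_n}} = \lambda\cdot\lambda^{{N_n}-1}$ so that $\lambda^{{N_n}-1}$ becomes a common factor, and simultaneously pull the shared numerator $\frac{{2\pi {R_0}{V_T} - rn\left( {{V_s} - {V_T}} \right)}}{{{V_s}}}$ out of each coefficient. What then remains is the scalar identity
\[
\lambda\cdot\frac{{{V_s} + {V_T}}}{{2\pi {V_T}^2}} + \frac{1}{{n\left( {{V_s} + {V_T}} \right)}} = \frac{1}{{n\left( {{V_s} + {V_T}} \right)}} + \frac{{{V_s}}}{{2\pi {V_T}^2}} + \frac{1}{{2\pi {V_T}}} + \frac{1}{{n{V_T}}},
\]
which follows by expanding $\lambda\left( {{V_s} + {V_T}} \right) = \left( {{V_s} + {V_T}} \right) + \frac{{2\pi {V_T}}}{n}$ and splitting $\frac{{{V_s} + {V_T}}}{{2\pi {V_T}^2}} = \frac{{{V_s}}}{{2\pi {V_T}^2}} + \frac{1}{{2\pi {V_T}}}$; the cross term supplies the $\frac{1}{{n{V_T}}}$ summand. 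This reproduces exactly the four-term bracket multiplying $\lambda^{{N_n}-1}$ in \eqref{e35}.

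The main obstacle I anticipate is entirely in the bookkeeping, not in any conceptual step: one must track the mismatched exponents so that $\lambda^{{N_n}-1}$ is the factor common to both exponential pieces, and stay attentive to the factor-of-$n$ normalizations appearing in the base $\lambda$ and in the numerator $2\pi {R_0}{V_T} - rn\left( {{V_s} - {V_T}} \right)$. Some of these $n$'s are suppressed in the intermediate displays \eqref{e26}--\eqref{e27}, so a naive term-by-term comparison would spuriously appear inconsistent; restoring them before factoring is what makes the shared numerator genuinely common. Once the numerator and the power $\lambda^{{N_n}-1}$ are cleanly extracted and the scalar identity above is verified, assembling \eqref{e35} is immediate.
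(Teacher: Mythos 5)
Your proposal is correct and takes essentially the same route as the paper: the total time is simply the sum of \eqref{e26}, \eqref{e27} and \eqref{e31} (the components being derived in Appendix A), and your consolidation --- collapsing the $R_0$ terms to $-R_0/V_T$, transcribing the non-exponential $r$-terms and $T_{linear}$, and factoring $\bigl(1+\tfrac{2\pi V_T}{n(V_s+V_T)}\bigr)^{N_n-1}$ together with the shared numerator to produce the four-term bracket --- is exactly the algebra behind \eqref{e35}. You also rightly observed that the factors of $n$ dropped in \eqref{e27} (compare \eqref{e83} and \eqref{e87} in Appendix A) must be restored before the two exponential coefficients genuinely share a common numerator.
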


\begin{figure}[htb!]
\noindent \centering{}\includegraphics[width=2.8in,height=2.3in]{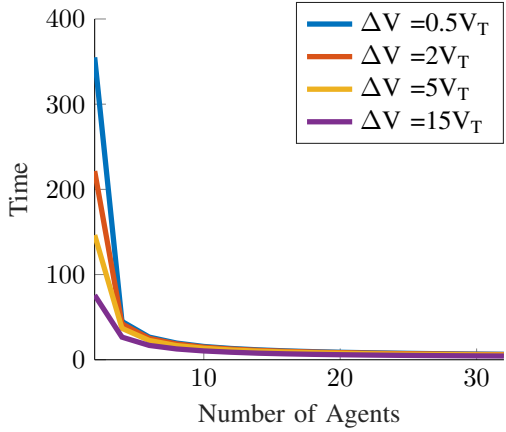} \caption{Total search times until complete cleaning of the evader region. We simulated sweep processes with an even number of agents, ranging from $2$ to $32$ agents, that employ the multi-agent same-direction circular sweep processes. We show results obtained for different values of speeds above the circular critical speed. The chosen values of the parameters are $r=10$, $V_T = 1$ and $R_0 = 100$.}
\label{Fig4Label}
\end{figure}

\section{Same-direction Spiral Sweep}
\subsection{Spiral Sweep Time Calculation}
Since our aim is to provide a sweep process that improves the same-direction circular sweep process we would like the sweepers to employ a more efficient motion throughout the sweep process. Therefore, we wish that throughout the motion of the sweepers, their sensors' footprint will maximally overlap the evader region. This can be obtained by a spiral scan, where the sweepers' sensors track the expanding evader region's wavefront, while preserving its shape to be as close as possible to a circle. An illustration of the initial placement of $2$ sweepers that employ the same-direction spiral sweep process is presented in Fig. $1 (b)$. The sweepers start with a sensor length of $2r$ inside the evader region. If the sweeper agents' speed is above the scenario's critical speed, the sweepers reduce the evader region's area after completing a traversal around the region. Each sweeper begins its spiral traversal with the tip of its sensor that is furthest from the center of the evader region, in a position that is tangent to the boundary of the evader region. In order to keep their sensors tangent to the evader region, the sweepers must travel at angle $\phi$ to the normal of the evader region. $\phi$ is calculated from $\sin \phi  = \frac{{{V_T}}}{{{V_s}}}$ This method of traveling at angle $\phi$ preserves the evader region's circular shape and is described thoroughly in \cite{francos2021tro}.

Contrary to the pincer-based strategy where each sweeper travels only an angle of $\frac{2\pi}{n}$ at each sweep iteration, in same-direction sweeps, each sweeper travels a larger angle than $\frac{2\pi}{n}$ at each iteration around the evader region in order to detect all escaping smart evaders. The additional angle, denoted by $\beta$, needs to be traversed in order to detect all evaders that may have spread from the "most dangerous points" at the beginning of each sweep. Such points are adjacent to the starting locations of every sweeper. The angle $\beta$ depends on the radius of the circle that bounds the evader region. After a sweeper traverses the additional angle $\beta$, the evader region's boundary is due to spread from points that resided at the lower tips of the sensors. When the tips of the sensors leave these points, evaders may spread from them in all directions at a speed of $V_T$. The time it takes a sweeper to travel an angle of $\frac{2\pi}{n} + {\beta _0}$, where ${\beta _0}$ is shown in Fig. $2 (b)$ is given by, 

\begin{equation}
{t_{\frac{2\pi}{n}  + {\beta _0}}} = \frac{{\left( {{R_0} - r} \right)\left( {{e^{\frac{{\left( {\frac{2\pi}{n}  + {\beta _0}} \right){V_T}}}{{\sqrt {{V_s}^2 - {V_T}^2} }}}} - 1} \right)}}{{{V_T}}}
\label{e309}
\end{equation}
The method of deriving this equation follows similar steps as in section $\text{V}$ of \cite{francos2021tro}. The subscript $0$ in ${\beta _0}$ denotes the iteration or cycle number, indicating that the value of $\beta$ changes as the sweep process progresses. After a sweeper completes a traversal of $\frac{2\pi}{n}  + {\beta _0}$ around the evader region it moves towards the center of the evader region. During this motion its lower tip points to the center of the region. $\beta _0$ is given by,
\begin{equation}
\sin {\beta _0} = \frac{{{V_T}{t_{\frac{2\pi}{n}  + {\beta _0}}}}}{{{R_0} - 2r}}
\label{e310}
\end{equation}
After a sweeper traverses $\frac{2\pi}{n} + \beta _0$ around the evader region the evader region's boundary is due to evaders that originated from the next "most dangerous points". The critical speed that satisfies the confinement task is computed numerically using the Newton method. When the sweepers travel towards the center of the evader region after completing a spiral sweep they have to meet the evader wavefront travelling outwards the region with a speed of $V_T$ at the previous radius $R_0$. Therefore, the expansion of the evader region after the first sweep at time ${t_{\frac{2\pi}{n}  + {\beta _0}}}$, has to satisfy that,
\begin{equation}
{V_T}{t_{\frac{2\pi}{n}  + {\beta _0}}} \leq \frac{{2r{V_s}}}{{{V_s} + {V_T}}}
\label{e330}
\end{equation}
The critical speed is obtained when we have equality in (\ref{e330}). In order to calculate $\beta_0$ that is obtained when the sweepers move at the critical speed, the expression of ${V_T}{t_{\frac{2\pi}{n}  + {\beta _0}}}$ in (\ref{e330}) is substituted with its equivalent expression in (\ref{e310}). Hence $\beta_0$ is, 
\begin{equation}
{\beta _0} = \arcsin \left( {\frac{{2r{V_s}}}{{\left( {{V_s} + {V_T}} \right)\left( {{R_0} - 2r} \right)}}} \right)
\label{e311}
\end{equation}
Substituting the expression for ${t_{\frac{2\pi}{n}  + {\beta _0}}}$, yields
\begin{equation}
\left( {{R_0} - r} \right)\left( {{e^{\frac{{\left( {\frac{2\pi}{n}  + {\beta _0}} \right){V_T}}}{{\sqrt {{V_s}^2 - {V_T}^2} }}}} - 1} \right) = \frac{{2r{V_s}}}{{{V_s} + {V_T}}}
\label{e317}
\end{equation}
In order to solve for $V_s$ we write (\ref{e317}) as,
\begin{equation}
F\left( {{V_s}} \right) = \frac{{2r{V_s}}}{{{V_s} + {V_T}}} - \left( {{R_0} - r} \right)\left( {{e^{\frac{{\left( {\frac{2\pi}{n}  + {\beta _0}} \right){V_T}}}{{\sqrt {{V_s}^2 - {V_T}^2} }}}} - 1} \right)
\label{e318}
\end{equation}
From (\ref{e318}) we find $V_s$ using the Newton iterative root finding method whose equation is given by,
\begin{equation}
{V_{{s_{n + 1}}}} = {V_{{s_n}}} - \frac{{F\left( {{V_{{s_n}}}} \right)}}{{\frac{{\partial F\left( {{V_{{s_n}}}} \right)}}{{\partial {V_{{s_n}}}}}}}
\label{e321}
\end{equation}
We choose as our initial estimate the lower bound on the sweeper speed (proved in \cite{francos2021tro}) given by ${V_{{s_0}}} = \frac{{\pi {R_0}{V_T}}}{nr}= V_{LB}$.
By using the described iterative convergence, we obtain a solution for $V_s$, which is the same-direction spiral sweep's critical speed. We denote this speed as ${V_{c_{spiral_{same}}}}$. After converging to a solution we obtain a result that is only slightly larger than the lower bound on the sweeper speed, $V_{LB}$.



Let us denote by $\Delta V >0$ the addition to the sweeper's speed above the critical speed. The speed is therefore given by, $V_s = {V_{c_{spiral_{same}}}} + \Delta V$. If a sweeper moves with a speed greater than the critical speed, after each spiral sweep it can advance inwards towards the center of the evader region and sweep around an evader region that is bounded by a circle with a smaller radius. The total search time until the evader region is bounded by a circle with a radius that is less than or equal to $2r$ is given by the sum of the total spiral sweep times and the times of the inward advances. Namely,
\begin{equation}
T = {T_{in}} + {T_{spiral}}
\label{e328}
\end{equation}
After each iteration, the sweepers move inwards towards the center of the evader region and the radius of the circle that bounds the region decreases. Consequently, the angle $\beta_i$ after which the sweepers move inwards changes as well. Therefore, after each sweep $\beta_i$ is calculated with respect to the new radius of the circle that bounds the evader region,
\begin{equation}
{\beta _i} = \arcsin \left( {\frac{{2r{V_s}}}{{\left( {{V_s} + {V_T}} \right)\left( {{R_i} - 2r} \right)}}} \right)
\label{e316}
\end{equation}
The time it takes to complete a spiral sweep of $\frac{2\pi}{n} +\beta_i$ around a region bounded by a circle of radius $R_i$ is given by, 
\begin{equation}
{T_{spira{l_i}}} = \frac{{\left( {{R_i} - r} \right)\left( {{e^{\frac{{\left( {\frac{2\pi}{n}  + {\beta _i}} \right){V_T}}}{{\sqrt {{V_s}^2 - {V_T}^2} }}}} - 1} \right)}}{{{V_T}}}
\label{e322}
\end{equation}
Denote the distance an agent can advance towards the center of the evader region by ${\delta _i}(\Delta V)$. In the term ${\delta _i}(\Delta V)$, $\Delta V$ denotes the increase in the agent speed relative to the critical speed, and $i$ denotes the number of sweep iterations the sweepers perform around the evader region, where $i$ starts from sweep number $0$. This results in a new evader region bounded by a circle with a radius of ${R_{i + 1}} = {R_i} - {\delta _i}(\Delta V)$. We have that,
\begin{equation}
{\delta _i}(\Delta V) = 2r - {V_T}{T_{spiral{i}}} \hspace{1mm}, \hspace{1mm} 0  \le {\delta _i}(\Delta V) \le 2r
\label{e324}
\end{equation}
As a function of the iteration number, we have that the distance a sweeper can advance inwards after completing an iteration is given by, 
\begin{equation}
{\delta _i}(\Delta V) = 2r - \left( {{R_i} - r} \right)\left( {{e^{\frac{{\left( {\frac{2\pi}{n}  + {\beta _i}} \right){V_T}}}{{\sqrt {{V_s}^2 - {V_T}^2} }}}} - 1} \right) 
\label{e325}
\end{equation}
After the sweeper completes a cycle, it moves inwards towards the center of the evader region in a way that the inner tip of its sensor points towards the center of the evader region with a speed of $V_s$, until it reaches the position where it starts its next sweep at the moment it meets the evader region's expanding wavefront. During the inwards advancements no cleaning is performed, while the evader region continues to spread. The time it takes the sweeper to move inwards until its entire sensor is over the evader region depends on the relative speed between the sweeper's inwards entry speed and the evader region outwards expansion speed and is given in (\ref{e329}). As the sweepers progress towards the center of the evader region, the evader region continuous to expand. Therefore, sweepers can only advance by a smaller distance than ${\delta _i}(\Delta V)$, denoted by ${\delta _{{i_{eff}}}}(\Delta V)$, which depends on the ratio between the speed in which the sweeper progresses towards the center of the region and the sum of velocities of sweeper and evader region spread. ${\delta _{{i_{eff}}}}(\Delta V)$ is the actual distance the sweeper moves at each iteration in order to meet the wavefront of the evader region when its entire sensor overlaps the evader region. Therefore, the distance sweepers can advance inwards after completing an iteration is given by,
\begin{equation}
{\delta _{{i_{eff}}}}(\Delta V) = {\delta _i}(\Delta V)\left( {\frac{{{V_s}}}{{{V_s} + {V_T}}}} \right)
\label{e326}
\end{equation}
The evader region is therefore bounded by a circle with a radius given by,
\begin{equation}
{R_{i + 1}} = {R_i} - {\delta _i}(\Delta V)\left( {\frac{{{V_s}}}{{{V_s} + {V_T}}}} \right)
\label{e327}
\end{equation}
This process continues until the evader region is bounded by a circle with a radius that is smaller than $2r$. We denote this radius as $R_N$. Once the evader region is contained inside a circular domain with a radius of $2r < R_i < 4r$,  $\beta_i$ is,
\begin{equation}
{\beta _i} = \arcsin \left( {\frac{{\left( {{R_i} - 2r} \right){V_s}}}{{\left( {{V_s} + {V_T}} \right)\left( {{R_i} - 2r} \right)}}} \right)
\label{e312}
\end{equation}
The inwards advancement time depends on the iteration number. It is denoted by $T_{i{n_i}}$ and its expression is given by,
\begin{equation}
{T_{i{n_i}}} = \frac{{{\delta _{{i_{eff}}}}(\Delta V)}}{{{V_s}}} = \frac{{2r - \left( {{R_i} - r} \right)\left( {{e^{\frac{{\left( {\frac{2\pi}{n}  + {\beta _i}} \right){V_T}}}{{\sqrt {{V_s}^2 - {V_T}^2} }}}} - 1} \right)}}{{{V_s} + {V_T}}}
\label{e329}
\end{equation}

During the inward advancements only the tip of the sensor, that has zero width, is inserted into the evader region. Therefore, no evaders are detected until the sweeper completes its inward advance and starts sweeping again. This search methodology continues until the evader region is bounded by a circle with a radius that is less than or equal to $2r$.
\subsection{Same-direction Spiral Sweep End-game}
In order to entirely clean the evader region, the sweepers need to change the scanning method when the evader region is bounded by a circle of radius $2r$, due to the same consideration that are described in the end-game of the same-direction circular sweep process. The depiction of the scenario at the beginning of the end-game is shown in Fig. $5$.
\begin{figure}
\noindent \centering{}\includegraphics[width=2.52in,height =2.52in]{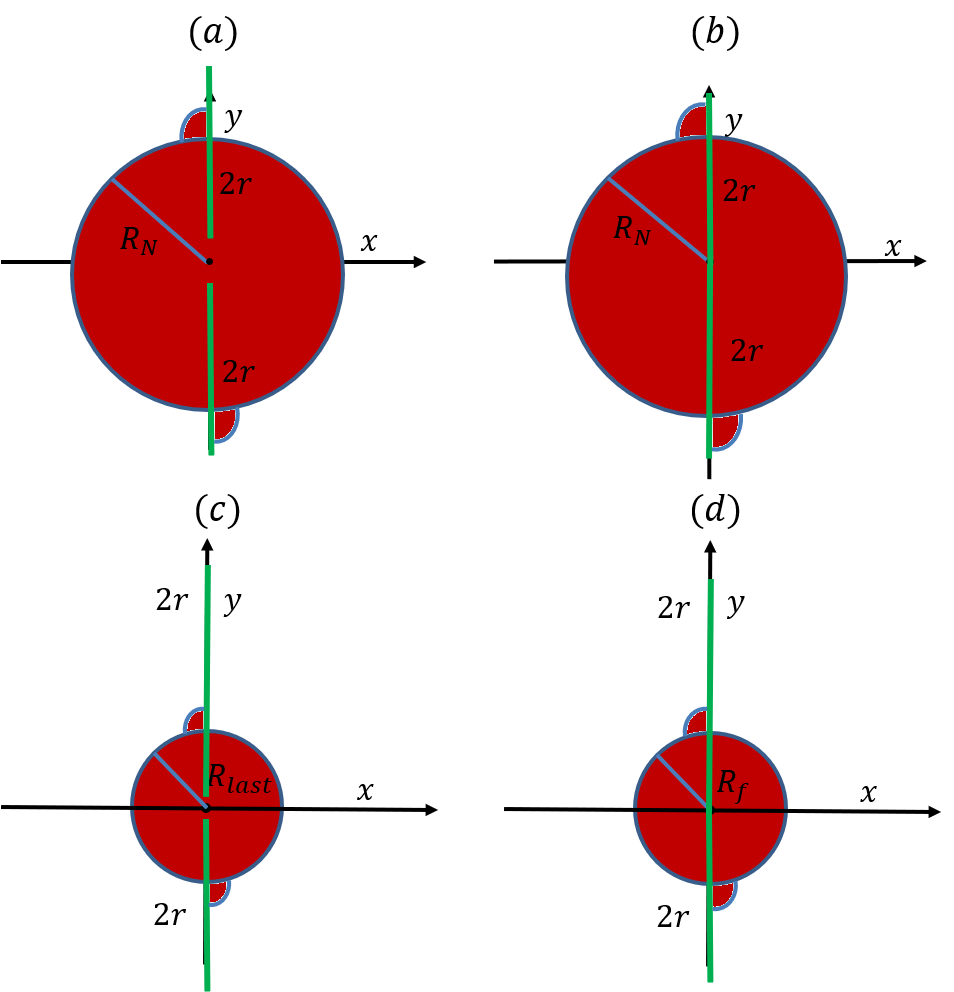} \caption{ Different stages of the end-game. The sweepers' sensors are shown in green and the evader region is shown in red. (a) - Depiction of the state at the beginning of the end-game. (b) - Depiction of the scenario after the sweepers move toward the center of the evader region and place the tips of their sensors at the center of the region. (c) - Depiction of the scenario after the last spiral sweep. (d) - Depiction of the scenario prior to the linear sweep.}
\label{Fig5Label}
\end{figure}
In the last inward advancement, the sweepers place the tips of their sensors at the center of the evader region. The time it takes the sweepers to complete this movement is given by ${T_e} = \frac{{{R_N}}}{{{V_s}}}$.    

Following the last inward advancement, the sweepers perform the last spiral sweep when the tips of their sensors are placed at the center of the evader region as can be seen in Fig. $5(b)$. In order to apply a linear sweeping movement the last spiral sweep has to reduce the evader region to be bounded by a circle of radius less than $2r$. Following the last inward advancement, the sweepers performs an additional spiral sweep when the center of each sweeper's sensor is as at a distance of $r$ from the center of the evader region. The time it takes to complete this sweep is denoted by $T_l$ and is given by,
\begin{equation}
{T_l} = \frac{{r\left( {{e^{\frac{{2\pi {V_T}}}{{n \sqrt {{V_s}^2 - {V_T}^2} }}}} - 1} \right)}}{{{V_T}}}
\label{e331}
\end{equation}
The depiction after the sweepers complete this last spiral movement can be seen in Fig. $5(c)$. During the last spiral sweep, the evader region spreads from its center point to a circle with a radius of, 
\begin{equation}
{R_{last}} = {T_l}{V_T} = r\left( {{e^{\frac{{2\pi {V_T}}}{{n\sqrt {{V_s}^2 - {V_T}^2} }}}} - 1} \right)
\label{e332}
\end{equation}
In order for a linear scan to be applicable ${R_{last}}$ has to be smaller than $2r$. This leads to a requirement on the sweeper's speed developed in Appendix $C$. Following this last spiral sweep, two sweepers place the tips of their sensors at the center of the evader region, advancing a distance of ${R_{in}} = 2r - {R_{last}}$. The time it takes the sweepers to complete this motion is given by ${T_{f}} = \frac{ 2r - {R_{last}}}{{{V_s}}}$. During this time the evaders spread to a circle of radius $R_f$ around the center of the region given by ${R_f} = {T_{f}}{V_T} + {R_{last}}$.

The depiction of the scenario at this time instance is shown in Fig. $5(d)$. Following this movement, the sweepers perform a linear motion and complete the search process. During the previously mentioned movement the margin between the edge of the sensors in each direction to the evader region boundaries must satisfy,
\begin{equation}
\frac{{2r - {R_{f}}}}{{{V_T}}} > {T_{linear}}
\label{e208}
\end{equation}
In order to guarantee no evader escapes undetected. The development of the linear sweep time follows the same steps as in the end-game section of the same-direction circular sweep when replacing $R_{f}$ instead of $R_{last}$. Therefore, the total scan time until a complete cleaning of the evader region is given by,

\begin{equation}
T_{total} = T_{spiral} + T_{in}+ T_e + T_l + T_f  + T_{linear}
\label{e223}
\end{equation}
For the one dimensional scan to be valid and ensure a non escape search and complete cleaning of the evader region, (\ref{e208}) must be satisfied. This demand results in the following requirement on ${V_s}$,

\begin{equation}
{V_s} \ge \frac{{2r{V_T} + {V_T}{R_f}}}{{2r - {R_f}}}
\label{e308}    
\end{equation}
For proof see Appendix $D$.

\begin{figure}[htb!]
\noindent \centering{}\includegraphics[width=2.8in,height=2.3in]{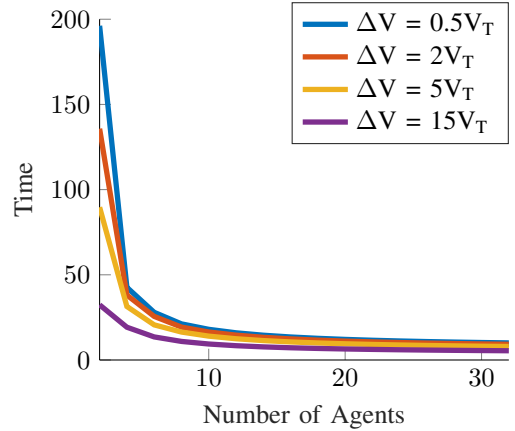} \caption{Total search times until complete cleaning of the evader region. We simulated sweep processes with an even number of agents, ranging from $2$ to $32$ agents, that employ the same-direction spiral sweep processes. We show results obtained for different values of speeds above the spiral critical speed. The chosen values of the parameters are $r=10$, $V_T = 1$ and $R_0 = 100$.}
\label{Fig6Label}
\end{figure}

\section{Comparative Analysis of Pincer Movement Search Strategies and Same-direction Sweeps}

The purpose of this section is to compare between the obtained results for the circular and spiral same-direction sweep processes that were developed in the previous sections and compare them to pincer sweep processes developed in \cite{francos2021tro}. In order to make a fair comparison between the total sweep times of sweeper swarms that can perform both the circular and spiral sweep processes the number of sweepers and sweeper speed must be the same in each of the tested spiral and circular swarms. The critical speed required for sweepers that perform the same-direction circular or spiral sweep processes is higher than the minimal critical speed of their pincer sweep counterparts. Additionally, since pincer sweep processes require a smaller critical speed compared to same-direction sweep processes, in order to make a fair comparison all sweepers in the swarm move at speeds above the critical speed of $2$ sweepers that perform the same-direction circular sweep.

Fig. $7$ shows the complete search times of swarms performing circular pincer sweeps, when the swarm's sweepers move at speeds above the same-direction circular critical speed. Fig. $8$ shows the complete search times of swarms performing the circular same-direction protocol. In both figures the sweepers move at the same speeds above the same-direction circular critical speed of $2$ sweepers, since this speed is greater than the critical speed of search processes performed with more sweepers. We see that for all choices of speeds above the same-direction circular critical speed of $2$ sweepers, the ratio between the complete search times of swarms performing same-direction circular sweeps and swarms performing their circular pincer sweep counterparts is greater than $1$, implying that same-direction circular sweeps require more time in order to clean the entire evader region. Hence, from these results we can conclude that performing circular pincer sweeps is always better than performing same-direction circular sweeps.

Fig. $9$ shows the complete search times of swarms performing spiral pincer sweeps, when the swarm's sweepers move at speeds above the same-direction spiral critical speed. Fig. $10$ shows the complete search times of swarms performing the spiral same-direction protocol. In both figures the sweepers move at the same speeds above the same-direction spiral critical speed of $2$ sweepers, since this speed is greater than the critical speed of search processes performed with more sweepers.

We see that for all choices of speeds above the same-direction spiral critical speed of $2$ sweepers, the ratio between the complete search times of swarms performing same-direction spiral sweeps and swarms performing their spiral pincer sweep processes counterparts is greater than $1$, implying that same-direction spiral sweeps require more time in order to clean the entire evader region. This result is expected since as the number of sweepers increases, the gain in utilizing the cooperation between the sweeping pairs in pincer-based sweep processes decreases the sweeping time more significantly compared to sweepers that perform the same-direction spiral sweeps. This occurs since same-direction sweepers must sweep larger angular sections at each iteration in order to ensure no evader escapes, while in pincer-based spiral search strategies, sweeping these additional sections is unnecessary due to the complementary trajectories of the sweepers. 

Hence, from these results we can conclude that performing spiral pincer sweeps is always better than performing same-direction sweeps.

Fig. $11$ presents a comparison between critical speeds required to perform each sweep protocol. Requiring a higher critical speed implies that there are entire regions of operation where an evader region with a given radius could be cleaned by a sweeper swarm that performs the same-direction spiral sweep process but cannot be cleaned by a sweeper swarm that performs the same-direction circular sweep process. This also implies that swarms that perform pincer movement search strategies can sweep larger regions than their same-direction sweeps counterparts.  

Furthermore, results show that as the number of sweepers increases, circular pincer-based protocols require a smaller critical speed even when compared to spiral same-direction protocols. This result indicates that although implementing pincer-based circular search protocols requires sweepers with more basic capabilities compared to spiral protocols, the cooperation between the sweepers considerably improves the overall performance of the sweeper team.

\begin{figure}[htb!]
\noindent \centering{}\includegraphics[width=2.8in,height=2.3in]{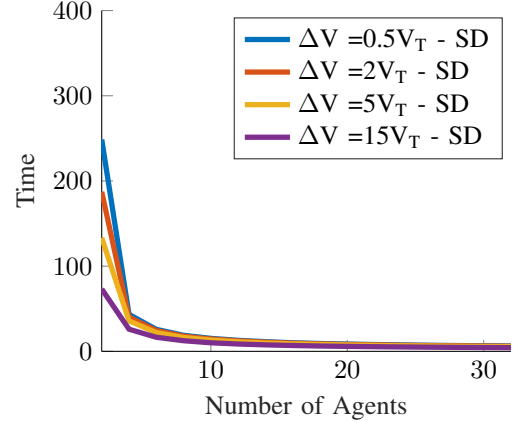} \caption{This plot shows the obtained sweep times for sweeper swarms performing the circular pincer sweep process that move at speeds above the critical speed of the circular same-direction process. The chosen values of the parameters are $r=10$, $V_T = 1$ and $R_0 = 100$.}
\label{Fig7Label}
\end{figure}

\begin{figure}[htb!]
\noindent \centering{}\includegraphics[width=2.8in,height=2.3in]{circular_sd_big.tikz} \caption{This plot shows the obtained sweep times for sweeper swarms performing the circular same-direction sweep process that move at speeds above the critical speed of the same-direction circular process. The chosen values of the parameters are $r=10$, $V_T = 1$ and $R_0 = 100$.}
\label{Fig8Label}
\end{figure}

\begin{figure}[htb!]
\noindent \centering{}\includegraphics[width=2.8in,height=2.3in]{spiral_pincer_big.tikz} \caption{This plot shows the obtained sweep times for sweeper swarms performing the spiral pincer sweep process that move at speeds above the critical speed of the spiral same-direction process. The chosen values of the parameters are $r=10$, $V_T = 1$ and $R_0 = 100$.}
\label{Fig9Label}
\end{figure}

\begin{figure}[htb!]
\noindent \centering{}\includegraphics[width=2.8in,height=2.3in]{spiral_same_direction_big.tikz} \caption{This plot shows the obtained sweep times for sweeper swarms performing the spiral pincer sweep process that move at speeds above the critical speed of the spiral same-direction process. The chosen values of the parameters are $r=10$, $V_T = 1$ and $R_0 = 100$.}
\label{Fig10Label}
\end{figure}


\begin{figure}[htb!]
\noindent \centering{}\includegraphics[width=2.8in,height=2.3in]{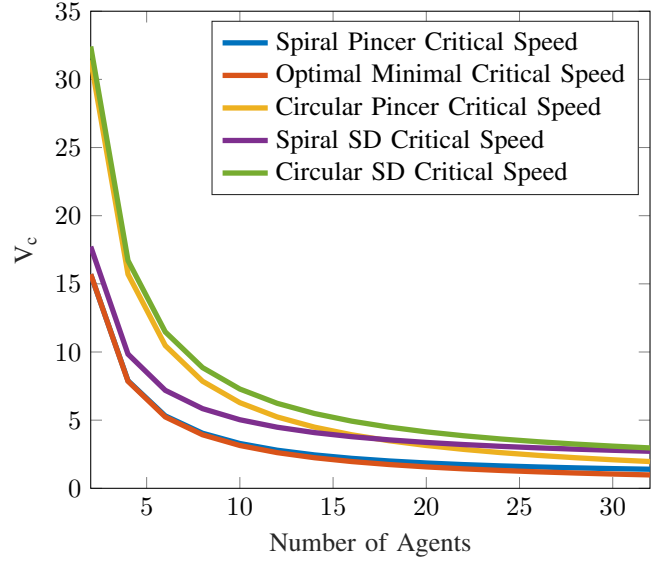} \caption{Critical speeds as a function of the number of sweepers. The number of sweepers is even, and ranges from $2$ to $32$ agents, that employ the spiral and circular pincer sweep process and the same-direction sweep protocols. The chosen values of the parameters are $r=10$, $V_T = 1$ and $R_0 = 100$.}
\label{Fig12Label}
\end{figure}

\section{Conclusions}
This work compares same-direction sweeps and pincer movement sweep protocols demonstrating and proving the superiority of the latter. We perform a quantitative comparison between pincer-based sweep protocols and same-direction sweep protocols for any number of even sweepers where the sensing capabilities and speeds of the swarms are equivalent. We prove that critical speeds for pincer based search methods are lower than their same-direction counterparts and therefore allow to sweep successfully larger regions. 

Afterwards, we provide a comparison between the different search methods in terms of completion times of the sweep processes and show that circular pincer-based approaches are always better than their same-direction counterparts. Furthermore, we show that pincer-based spiral sweep search times are shorter for all choices of search parameters compared to their same-direction counterparts, as well.

Hence, for all choices of search parameters and protocols the pincer-based protocols are best.

\appendices
\section{}
The inward advancement time at iteration $i$ is denoted by ${T_{i{n_i}}}$. It is given by,
\begin{equation}
{T_{i{n_i}}} = \frac{{{\delta _{{i_{eff}}}}(\Delta V)}}{{{V_s}}} = \frac{{rn\left( {{V_s} - {V_T}} \right) - 2\pi {R_i}{V_T}}}{{n{V_s}\left( {{V_s} + {V_T}} \right)}}
\label{e76}    
\end{equation}
The total advancement time until the evader region is bounded by a circle of with a radius that is less than or equal to r is denoted as $\widetilde{T}_{in}$. It is given by,
\begin{equation}
\widetilde{T}_{in} = \sum\limits_{i = 0}^{N - 2} {{T_{i{n_i}}}}  = \frac{{\left( {{N_n} - 1} \right)rn\left( {{V_s} - {V_T}} \right)}}{{n{V_s}\left( {{V_s} + {V_T}} \right)}} - \frac{{2\pi {V_T}\sum\limits_{i = 0}^{{N_n} - 2} {{R_i}} }}{{n{V_s}\left( {{V_s} + {V_T}} \right)}}
\label{e77}    
\end{equation}
We have that,
\begin{equation}
{R_{N_n - 2}} = \frac{{{c_1}}}{{1 - {c_2}}} + {c_2}^{N_n - 2}\left( {{R_0} - \frac{{{c_1}}}{{1 - {c_2}}}} \right)
\label{e78}    
\end{equation}
The sum of the radii is given by,
\begin{equation}
\sum\limits_{i = 0}^{N_n - 2} {{R_i}}  = \frac{{{R_0} - {c_2}{R_{N_n - 2}} + (N_n - 2){c_1}}}{{1 - {c_2}}}
\label{e79}    
\end{equation}
Where the coefficients $c_1$ and $c_2$ are given by,
\begin{equation}
{c_1} =  - \frac{{r\left( {{V_s} - {V_T}} \right)}}{{{V_s} + {V_T}}},{c_2} = 1 + \frac{{2\pi {V_T}}}{{n\left( {{V_s} + {V_T}} \right)}}
\label{e80}    
\end{equation}
Substitution of terms for the expression of ${R_{N_n - 2}}$ in (\ref{e78}) yields,
\begin{equation}
\begin{array}{l}
{R_{N_n - 2}} = \frac{{r\left( {{V_s} - {V_T}} \right)}}{{2\pi {V_T}}} + \\ {\left( {1 + \frac{{2\pi {V_T}}}{{{V_s} + {V_T}}}} \right)^{N_n - 2}}\left( {\frac{{2\pi {R_0}{V_T} - r\left( {{V_s} - {V_T}} \right)}}{{2\pi {V_T}}}} \right)
\end{array}
\label{e81}    
\end{equation}
Substitution of terms in (\ref{e79}) yields,
\begin{equation}
\begin{array}{l}
\sum\limits_{i = 0}^{{N_n} - 2} {{R_i}}  =  - \frac{{{R_0}n\left( {{V_s} + {V_T}} \right)}}{{2\pi {V_T}}} + \frac{{r{n^2}\left( {{V_s} - {V_T}} \right)\left( {{V_s} + {V_T}} \right)}}{{{{\left( {2\pi {V_T}} \right)}^2}}} + \\ \frac{{{N_n}nr\left( {{V_s} - {V_T}} \right)}}{{2\pi {V_T}}} - \frac{{rn\left( {{V_s} - {V_T}} \right)}}{{2\pi {V_T}}} + \\ {\left( {1 + \frac{{2\pi {V_T}}}{{n\left( {{V_s} + {V_T}} \right)}}} \right)^{{N_n} - 1}}\left( {\frac{{\left( {2\pi {R_0}{V_T} - rn\left( {{V_s} - {V_T}} \right)} \right)n\left( {{V_s} + {V_T}} \right)}}{{{{\left( {2\pi {V_T}} \right)}^2}}}} \right)
\end{array}
\label{e82}    
\end{equation}
We therefore obtain that,
\begin{equation}
\begin{array}{l}
\widetilde{T}_{in} = \frac{{{R_0}}}{{{V_s}}} - \frac{{rn\left( {{V_s} - {V_T}} \right)}}{{2\pi {V_T}{V_s}}}\\ - {\left( {1 + \frac{{2\pi {V_T}}}{{n\left( {{V_s} + {V_T}} \right)}}} \right)^{{N_n} - 1}}\left( {\frac{{2\pi {R_0}{V_T} - rn\left( {{V_s} - {V_T}} \right)}}{{2\pi {V_T}{V_s}}}} \right)
\end{array}
\label{e83}    
\end{equation}
The last inward advancement is given by,
\begin{equation}
{T_{_{in}last}} = \frac{{{R_{N_n}}}}{{{V_s}}}
\label{e84}    
\end{equation}
We have that,
\begin{equation}
{R_{N_n}} = \frac{{{c_1}}}{{1 - {c_2}}} + {c_2}^{N_n}\left( {{R_0} - \frac{{{c_1}}}{{1 - {c_2}}}} \right)
\label{e85}    
\end{equation}
Therefore,
\begin{equation}
\begin{array}{l}
{R_{{N_n}}} = \frac{{rn\left( {{V_s} - {V_T}} \right)}}{{2\pi {V_T}}} + \\ {\left( {1 + \frac{{2\pi {V_T}}}{{n\left( {{V_s} + {V_T}} \right)}}} \right)^{{N_n}}}\left( {\frac{{2\pi {R_0}{V_T} - rn\left( {{V_s} - {V_T}} \right)}}{{2\pi {V_T}}}} \right)
\end{array}
\label{e86}    
\end{equation}
Substitution of terms yields,
\begin{equation}
\begin{array}{l}
{T_{_{in}last}} = \frac{{rn\left( {{V_s} - {V_T}} \right)}}{{2\pi {V_T}{V_s}}} + \\  {\left( {1 + \frac{{2\pi {V_T}}}{{n\left( {{V_s} + {V_T}} \right)}}} \right)^{{N_n}}}\left( {\frac{{2\pi {R_0}{V_T} - rn\left( {{V_s} - {V_T}} \right)}}{{2\pi {V_T}{V_s}}}} \right)
\end{array}
\label{e87}    
\end{equation}
The total inward advancement times is therefore given by,
\begin{equation}
{T_{in}} = \frac{{{R_0}}}{{{V_s}}} + {\left( {1 + \frac{{2\pi {V_T}}}{{{V_s} + {V_T}}}} \right)^{N - 1}}\left( {\frac{{2\pi {R_0}{V_T} - r\left( {{V_s} - {V_T}} \right)}}{{{V_s}\left( {{V_s} + {V_T}} \right)}}} \right)
\label{e88}    
\end{equation}
The time it takes the sweepers to perform the circular sweeps before the evader region is bounded by a circle with a radius that is smaller or equal to $r$ is given by,
\begin{equation}
\widetilde{T}_{circular} = \frac{{{T_0} - {c_2}{T_{{N_n} - 1}} + \left( {{N_n} - 1} \right){c_3}}}{{1 - {c_2}}}
\label{e89}    
\end{equation}
Where the coefficient $c_3$ is given by,
\begin{equation}
{c_3} =  - \frac{{2\pi r\left( {{V_s} - {V_T}} \right)}}{{n{V_s}\left( {{V_s} + {V_T}} \right)}}
\label{e90}    
\end{equation}
The time it takes the sweepers to perform the first sweep is given by,
\begin{equation}
{T_0} =\frac{{2\pi {R_0}}}{{n{V_s}}}
\label{e91}    
\end{equation}
The time it takes the sweepers to perform the last circular sweep is given by,
\begin{equation}
\begin{array}{l}
{T_{{N_n} - 1}} = \frac{{r\left( {{V_s} - {V_T}} \right)}}{{{V_s}{V_T}}} + \\ {\left( {1 + \frac{{2\pi {V_T}}}{{n\left( {{V_s} + {V_T}} \right)}}} \right)^{{N_n} - 1}}\left( {\frac{{2\pi {R_0}{V_T} - rn\left( {{V_s} - {V_T}} \right)}}{{n{V_s}{V_T}}}} \right)
\end{array}
\label{e92}    
\end{equation}
Therefore $\widetilde{T}_{circular}$ is given by,
\begin{equation}
\begin{array}{l}
\widetilde{T}_{circular} = - \frac{{{R_0}\left( {{V_s} + {V_T}} \right)}}{{{V_s}{V_T}}} + \frac{{r\left( {{V_s} - {V_T}} \right)\left( {n\left( {{V_s} + {V_T}} \right) + 2\pi {V_T}{N_n}} \right)}}{{2\pi {V_T}^2{V_s}}} + \\ {\left( {1 + \frac{{2\pi {V_T}}}{{n\left( {{V_s} + {V_T}} \right)}}} \right)^{{N_n}}}\left( {{V_s} + {V_T}} \right)\left( {\frac{{2\pi {R_0}{V_T} - rn\left( {{V_s} - {V_T}} \right)}}{{2\pi {V_T}^2{V_s}}}} \right)
\end{array}
\label{e93}    
\end{equation}
The last circular sweep occurs when the lowest tips of the sweepers' sensors are located at the center of the evader region. It is given by,
\begin{equation}
{T_{last}} = \frac{{2\pi r}}{{{n V_s}}}
\label{e94}    
\end{equation}
Therefore the total circular traversal times are given by,
\begin{equation}
\begin{array}{l}
{T_{circular}} =  - \frac{{{R_0}\left( {{V_s} + {V_T}} \right)}}{{{V_s}{V_T}}} + \frac{{r\left( {{V_s} - {V_T}} \right)\left( {n\left( {{V_s} + {V_T}} \right) + 2\pi {V_T}{N_n}} \right)}}{{2\pi {V_T}^2{V_s}}} + \\ {\left( {1 + \frac{{2\pi {V_T}}}{{n\left( {{V_s} + {V_T}} \right)}}} \right)^{{N_n}}}\left( {{V_s} + {V_T}} \right)\left( {\frac{{2\pi {R_0}{V_T} - rn\left( {{V_s} - {V_T}} \right)}}{{2\pi {V_T}^2{V_s}}}} \right) + \\ \frac{{2\pi r}}{{n{V_s}}}
\end{array}
\label{e95}    
\end{equation}

\section{}
For the one dimensional scan to be valid and ensure a non escape search and complete cleaning of the evader region (\ref{e30}) must be satisfied. This demand implies that,

\begin{equation}
\frac{{2r - {R_{last}}}}{{{V_T}}} > \frac{{{R_{last}}\left( {3{V_s} - {V_T}} \right)}}{{{{\left( {{V_s} - {V_T}} \right)}^2}}}
\label{e32}
\end{equation}
By rearranging terms, (\ref{e32}) can be written as,
\begin{equation}
2r{\left( {{V_s} - {V_T}} \right)^2} > {R_{last}}{V_s}\left( {{V_s} + {V_T}} \right)
\label{e33}
\end{equation}
By substitution of $R_0$ with $\alpha r$ where $\alpha  > 1$ and by substituting the terms for $V_s$ and ${R_{last}}$, (\ref{e33}) resolves to a quadratic equation in $\Delta V$ that has only one positive root. This root is a monotonically decreasing function in $\alpha$, given by
\begin{equation}
\Delta V \ge \frac{{ - 4\pi {V_T}\alpha  + \pi {V_T} + {V_T}\sqrt {{\pi ^2} + 8\pi n} }}{{2n}}
\label{e34}
\end{equation}

We have that,
\begin{equation}
2r{\left( {{V_s} - {V_T}} \right)^2} > {R_{last}}{V_s}\left( {{V_s} + {V_T}} \right)
\label{e96}    
\end{equation}
And,
\begin{equation}
{V_s} = \frac{{2\pi {R_0}{V_T}}}{{rn}} + {V_T} + \Delta V \hspace{1mm} {R_{last}} = \frac{{2\pi r{V_T}}}{{n{V_s}}}
\label{e97}    
\end{equation}
Denoting $\alpha  = \frac{{{R_0}}}{r}$ and substituting the following terms in (\ref{e96}) yields,
\begin{equation}
2r{\left( {\frac{{2\pi {V_T}\alpha }}{n} + \Delta V} \right)^2} > \frac{{2\pi r{V_T}}}{n}\left( {\frac{{2\pi \alpha {V_T}}}{n} + 2{V_T} + \Delta V} \right)
\label{e98}    
\end{equation}
Rearranging terms yields a quadratic equation in $\Delta V$,
\begin{equation}
\begin{array}{{l}}
\Delta {V^2} + \Delta V\left( {\frac{{4\pi {V_T}\alpha  - \pi {V_T}}}{n}} \right)  + \frac{{4{\pi ^2}{V_T}^2{\alpha ^2} - 2{\pi ^2}\alpha {V_T}^2 - 2\pi n{V_T}^2}}{{{n^2}}} \\ > 0
\end{array}
\label{e99}    
\end{equation}
Equation (\ref{e99}) has a positive and a negative root. Since $\Delta V$ is non-negative we are interested only in the positive root. Therefore, in order to completely clean the evader region $\Delta V$ has to satisfy
\begin{equation}
\Delta V \ge \frac{{ - 4\pi {V_T}\alpha  + \pi {V_T} + {V_T}\sqrt {{\pi ^2} + 8\pi n} }}{{2n}}
\label{e100}    
\end{equation}

\section{}
In order for a linear scan to be applicable ${R_{last}}$ has to be smaller than $2r$. This leads to a requirement on the sweeper's velocity,
\begin{equation}
r\left( {{e^{\frac{{2\pi {V_T}}}{{n\sqrt {{V_s}^2 - {V_T}^2} }}}} - 1} \right) < 2r    
\label{e200}    
\end{equation}
Which resolves to,
\begin{equation}
{e^{\frac{{2\pi {V_T}}}{{n\sqrt {{V_s}^2 - {V_T}^2} }}}} < 2
\label{e201}    
\end{equation}
Yielding the requirement on the velocity,
\begin{equation}
{V_s} > {V_T}\sqrt {\frac{{4{\pi ^2}}}{{{{\left( {n\ln 2} \right)}^2}}} + 1} 
\label{e202}    
\end{equation}
Since we previously observed that the spiral critical velocity (for the considered spiral process) is close to the lower bound on the critical velocity, $V_{LB}$, we can check whether the condition in (\ref{e200}) is automatically satisfied. If the requirement on $V_s$ in (\ref{e202}) is less than $V_{LB}$, then since the sweepers move with a velocity above it, then (\ref{e200}) is always satisfied. Therefore if,
\begin{equation}
{V_T}\sqrt {\frac{{4{\pi ^2}}}{{{{\left( {n\ln 2} \right)}^2}}} + 1}  < \frac{{\pi {R_0}{V_T}}}{nr} = V_{LB}
\label{e203}    
\end{equation}
Or if the ratio $\frac{{{R_0}}}{r}$ satisfies that,
\begin{equation}
\sqrt {\frac{4}{{{{\left( {n\ln 2} \right)}^2}}} + \frac{1}{{{\pi ^2}}}}  < \frac{{{R_0}}}{r}
\label{e204}    
\end{equation}
Then ${R_{last}}$ is smaller than $2r$.
\section{}
For the one dimensional scan to be valid and ensure a non escape search and complete cleaning of the evader region, (\ref{e208}) must be satisfied. This demand implies that,
\begin{equation}
\frac{{2r - {R_{f}}}}{{{V_T}}} > \frac{{{R_{f}}\left( {3{V_s} - {V_T}} \right)}}{{{{\left( {{V_s} - {V_T}} \right)}^2}}}
\label{e333}
\end{equation}
By rearranging terms, (\ref{e333}) can be written as,
\begin{equation}
2r{\left( {{V_s} - {V_T}} \right)^2} > {R_{f}}{V_s}\left( {{V_s} + {V_T}} \right)
\label{e334}
\end{equation}
By substitution of $R_0$ with $\alpha r$ where $\alpha  > 1$ and by substituting the terms for $V_s$ and ${R_{f}}$, (\ref{e334}) resolves to a quadratic equation in $\Delta V$ that has only one positive root. This root is a monotonically decreasing function in $\alpha$, given by
\begin{equation}
{V_s}^2\left( {2r - {R_f}} \right) - {V_s}{V_T}\left( {4r + {R_f}} \right) + 2r{V_T}^2 > 0  
\label{e335}    
\end{equation}
The quadratic equation in (\ref{e335}) has $2$ positive roots. Therefore, in order for the one dimensional linear scan to be valid $V_s$ has to be greater than the largest positive root. Implying that,
\begin{equation}
{V_s} \ge \frac{{2r{V_T} + {V_T}{R_f}}}{{2r - {R_f}}}
\label{e336}    
\end{equation}

\ifCLASSOPTIONcaptionsoff
  \newpage
\fi
\bibliographystyle{IEEEtran}
\bibliography{pincer_same_direction_search_for_smart_evaderscomparison}
\end{document}